\newcommand{\Continuous}[1]{\mathit{Flow}(#1)}   
\newcommand{\notContinuous}[1]{\mathit{Jump}(#1)}   
\newcommand{\Machine}{\textsc{Dynamic}}
\newcommand{\St}{\ensuremath{\mathcal {S}}}
\newcommand{\R}{\ensuremath{\mathbb{R}}}
\newcommand{\N}{\ensuremath{\mathbb{N}}}
\newcommand{\I}{\ensuremath{\mathbb{I}}}
\newcommand{\TT}{\ensuremath{\mathbb{T}}}
\newcommand{\In}{\St_0}
\newcommand{\U}{\ensuremath{\mathcal {V}}}
\newcommand{\evaluation}[2][]{\ensuremath{\left\llbracket #2\right\rrbracket_{#1}}}
\newcommand{\val}[2]{\evaluation[#2]{#1}}
\newcommand{\vall}[2]{\val{#2}{#1}}
\newcommand{\True}{\textsf{true}}
\newcommand{\update}[3]{(#1,#2,#3)}
\spnewtheorem{postulate}{Postulate}{\bfseries}{\itshape}
\spnewtheorem{postulatep}{Postulate}{\bfseries}{\itshape}
\renewcommand\eqref[1]{(\ref{#1})}
\newcommand\LOCATION[2]{(#1,#2)}
\newcommand{\J}[1]{\{\JJ\}\times #1}
\newcommand{\C}[1]{\{\CC\}\times #1}
\newcommand{\JJ}{\mathcal{J}}
\newcommand{\CC}{\mathcal{F}}
\newcommand{\gen}[1]{\Delta_t(#1)}
\newcommand\Deltapsi{\Delta_\psi}
\newcommand{\itm}[1]{\mbox{\rm(}#1\mbox{\rm)}}
\newcommand{\nd}[1]{#1}
\let\ep\endproof
\renewcommand{\endproof}{\qed\ep}
\newenvironment{ttcode}{\begin{ttfamily}\rm\tt}{\end{ttfamily}}
\newcommand\s{\phantom{x}}
\begin{document}
\title{Axiomatizing Analog Algorithms}
\author{Olivier Bournez\inst{1}\thanks{This author's research was
    partially supported by a French National Research
    Agency's grant (ANR-15-CE40-0016-02).}
  \and Nachum
  Dershowitz\inst{2}\thanks{This author's research benefited from
    a fellowship at the Institut d'\'Etudes Avanc\'ees de Paris
    (France), with the financial support of the French National Research Agency's ``Investissements
    d'avenir'' program (ANR-11-LABX-0027-01 Labex RFIEA+).}
\and Pierre N\'eron\inst{3}}
\institute{Laboratoire d'Informatique de l'X (LIX), \'Ecole
  Polytechnique, France \and
School of Computer Science, Tel Aviv University, Ramat Aviv, Israel
\and French Network and Information Security Agency (ANSSI), France \\
\email{bournez@lix.polytechnique.fr}~~\email{nachum@cs.tau.ac.il}~~\email{pierre.neron@ssi.gouv.fr}}
\authorrunning{O. Bournez and N. Dershowitz and P. N\'eron}

\maketitle

\begin{abstract}
  We propose a formalization of generic algorithms that includes analog
  algorithms.  This is achieved by reformulating and extending the framework of abstract
  state machines to include continuous-time models of computation.
     {We prove that every hybrid algorithm satisfying some reasonable postulates   may be expressed precisely by
   a program in a simple and expressive language.}
\end{abstract}

\section{Introduction}

In \cite{Gur00}, Gurevich showed
that any algorithm that satisfies three intuitive ``Sequential
Postulates'' can be step-by-step emulated by an {abstract state machine (ASM)}.  These postulates
formalize the following intuitions: (I) one is dealing with
discrete deterministic state-transition systems; (II) the information in states
suffices to determine future transitions and may be captured by
logical structures that respect isomorphisms; and (III) transitions
are governed by the values of a finite and input-independent set of
ground
terms.
All notions of algorithms for ``classical'' \emph{discrete-time} models of
computation in computer science are covered by this formalization.
This includes Turing machines, random-access memory (RAM) machines,
and their sundry extensions.
The geometric constructions in \cite{Reisig}, for example, are
loop-free examples of discrete-step continuous-space (real-number) algorithms.
The ASM formalization also covers general discrete-time models evolving over continuous
space like the Blum-Shub-Smale machine model \cite{BSS}.

However, capturing \emph{continuous-time} models of computation is
still a challenge, that is to say, capturing models of computation that operate
in continuous (real) time and with real values.
Examples of continuous-time models of computations include models of analog
machines like the General Purpose Analog Computer (GPAC) of Claude Shannon
\cite{Sha41}, proposed as a mathematical model of the Differential
Analyzers, built for the first time in 1931 \cite{Bush31}, and
used to solve various problems ranging from
ballistics to aircraft design -- before the era of the digital computer
\cite{Nyc96}. 
Others include
Pascal's 1642 \textit{Pascaline},
Hermann's 1814 \textit{Planimeter},
as well as
Bill Phillips' 1949 water-run \textit{Financephalograph}.
Continuous-time computational models also include neural networks
and systems built using electronic
analog devices.
Such systems begin in some initial state and evolve over time;
results are read off from the evolving
state and/or from a terminal state. More generally, determining which
systems can actually be considered to be computational models is an
intriguing question and relates to philosophical discussions about
what constitutes a programmable machine.
Continuous-time
computation theory is far less understood than its discrete-time
counterpart \cite{Survey}.
Another line of development of continuous-time  models was
 motivated by hybrid systems, particularly by questions related to
the hardness of their verification and control.  In hybrid
 systems, the dynamics change in response to changing
conditions, so there are discrete transitions as well as continuous
ones.
Here, models are not
seen as necessarily modeling analog machines, but,
rather, as abstractions of systems about
which one would like to establish properties or derive
verification algorithms \cite{Survey}.
Some work on ASM models dealing with
continuous-time systems has been accomplished
for specific cases \cite{Cohen,Cohen2}.
Rust \cite{rust2000hybrid} specifies  forms of continuous-time evolution
based on ASMs using infinitesimals.
However, we find that a comprehensive framework
capturing general analog systems is still wanting.

Our goal is to capture all such analog and hybrid models within one uniform notion of
computation and of algorithm.
To this end, we formalize a generic notion of
continuous-time algorithm. The proposed
framework is an extension of \cite{Gur00}, as discrete-time
algorithms are a simple special case of analog algorithms.
(The initial attempt \cite{BournezDF12} was not fully satisfactory, as no
completeness theorem nor general-form result was obtained. Here, we
indeed achieve both.)
We
provide postulates defining continuous-time algorithms, in the spirit of
  those of \cite{Gur00},
  and we
prove some completeness results.
We define a simple notion of an analog ASM
program and  prove that all models satisfying the postulates
have
  corresponding analog programs (Lemma
  \ref{mlemma} and Theorem \ref{thone}).
Furthermore, we provide  conditions guaranteeing that said program is unique up
to equivalence (Theorem \ref{thtwo} and Corollary
\ref{coromain}).
All of this seamlessly extends the
results of \cite{Gur00} to analog and hybrid systems.
The proposed framework  covers all classes of
continuous-time systems that can be modeled by ordinary differential
equations or have hybrid dynamics, including the models in \cite{Survey} and
 the examples in \cite{BournezDF12}. 
 It is a first step towards a general understanding of
computability theory for continuous-time models, taken in the hope that it will also lead to a
formalization of a ``Church-Turing thesis'' for analog systems in the spirit of what has been achieved for discrete-time models~\cite{BD,dershowitz2008natural,3P}.
Systems with continuous input signals
and other means of specifying continuous behavior are left for future
work.

Some of our ideas were inspired by the way the semantics
of hybrid systems are given in the approach of
Platzer~\cite{Platzer08}.
Among attempts at studying the semantics of analog systems  within a general
framework  is
\cite{TZ}. Recent results on comparing analog models include
\cite{fu2015models}. Soundness and (relative) completeness results for
a programming language with infinitesimals have also been obtained in
\cite{suenaga2011programming}. Applications to verification have  been
explored \cite{hasuo2012exercises}.

\section{General Algorithms}

We want to generalize the notion of algorithms introduced by Gurevich in \cite{Gur00} in
order to capture not only the sequential case but also continuous
behavior. (For lack of place, we assume some familiarity with
\cite{Gur00}.)
However, when evolving continuously, an algorithm can no longer be viewed as a discrete sequence of
states, and we need a notion of evolution that can capture both kinds of behavior. This
is based on a notion of a \emph{timeline}  that corresponds
to  algorithm execution.


\begin{definition}[Time]
  \emph{Time} $\TT$  corresponds
    to a totally ordered monoid: there is an associative binary
    operation $+$, with some neutral element $0$, and a total relation $\le$ preserved
  by $+$:  $t \le t'$ implies $t+t'' \le t'+t''$ for all $t''\in\TT$.
\end{definition}

An element of $\TT$ will be called a \emph{moment}. Examples of time \TT\@
are $\R^{\geq 0}$ and $\N$. As expected, $t < t'$ will mean $t
\le t'$ but not $t = t'$.


\begin{definition}[Timeline]
A timeline is a subset of $\TT$ containing $0$. We let $\I$ denote the set of all timelines.
\end{definition}

For a moment $i \in I$ of timeline $I$,  we write $\notContinuous{i}$ if there exists
$t \in I$ with $i<t$, and there is no $t' \in I$ with $i < t' < t$.
We write $\Continuous{i}$ otherwise: that means that for all $t$, $i<t$,
there is some in-between $t' \in I$ with $i < t' < t$.
\nd{A moment $i$ with $\notContinuous{i}$ is meant to indicate a
  discrete transition.  In this case, we write $i^+$ for the smallest
  $t$ greater than $i$.}
A timeline $I$ is non-Zeno if for any moment $i \in I$, there is a
finite number of moments $j \le i$ with $\notContinuous{j}$. $\I$ is
non-Zeno if all its timelines are.

For timelines $\I=\R^{\geq 0}$, for instance, we have $\Continuous{i}$ for all $i \in
    \I$.
    For $\I=\N$, we have $\notContinuous{i}$ for all $i \in \I$, and
    $i^+=i+1$.
    We intend (for hybrid systems, in particular) to also
    consider timelines  mixing both properties, that is, with
    $\Continuous{i}$ for some $i$ and $\notContinuous{i}$ for
    other $i$. Formally building  such timelines is easy (for
      example $\bigcup_{n \in \N} [n,n+0.5]$). All these examples are non-Zeno.

\begin{definition}[Truncation]
Given a timeline $I\in\I$ and a moment $i$ of $I$, the
\emph{truncated} timeline $I[i]$ is the timeline defined by
$I[i] = \{t \mid  i+t \in I\}$.
\end{definition}



With timelines in hand, we can define hybrid dynamical systems.

\begin{definition}[Dynamical System]\label{LTS} \label{def:ts}
A \emph{dynamical  system} $\langle \St, \In, \iota, \varphi 
\rangle$
consists of the following:
\itm{a} a nonempty set (or class)
$\St$ of \emph{states};
\itm{b} a nonempty subset (or subclass) $\In \subseteq \St$, called \emph{initial} states;
\itm{c} a \emph{timeline} map $\iota : \St \to \I$, with $\I$ non-Zeno;
\itm{d} a \emph{trajectory} map $\varphi: (X : \St) \times \iota(X) \to \St$.
We require that,
for any state $X$ and moments $i, i+i' \in \iota(X)$, one has
$$\varphi(X,0) = X\,, \hspace*{8mm}
\iota(\varphi(X,i)) = \iota(X)[i]\,, \hspace*{8mm}
\varphi(X,i+i') = \varphi(\varphi(X,i),i')\,.
$$
\end{definition}

Together, the \emph{timeline} and \emph{trajectory} maps associate to each state its future
evolution. For a state $X$, $\iota(X)$ defines the timeline corresponding to the system behavior
starting from $X$, and $\varphi(X)$ defines its concrete evolution by associating to each moment in
$\iota(X)$ its corresponding state. The third condition ensures that evolution during $i+i'$ is similar to
first evolving during $i$ and then during $i'$; the preceding condition ensures a similar
property for timelines (and ensures  consistency of the last condition).

\begin{postulatep} An algorithm is a dynamical system.
\end{postulatep}


A vocabulary $\U$ is a finite collection of fixed-arity (possibly
nullary)  function
symbols,
 some functions of which may be tagged \emph{relational}.  A term whose
 outermost function symbol is relational is termed \emph{Boolean}. We
 assume that $\U$ contains the scalar (nullary) function \textit{true}.
A (first-order) \emph{structure} $X$ of vocabulary $\U$ is a nonempty set $S$, the
\emph{base set (domain)} of $X$, together with interpretations of the function
symbols in $\U$ over $S$: A $j$-ary function  symbol $f$ is interpreted as a function,
denoted  $\val{f}{X}$, from $S^j$ to $S$.  Elements of $S$ are also called elements of
$X$, or \emph{values}. Similarly, the interpretation of a term $f(t_1,\dots ,t_n)$ in $X$
is recursively defined by $\val{f(t_1,\dots ,t_n)}{X} = \val{f}{X}(\val{t_1}{X},\dots ,\val{t_n}{X})$.

Let $X$ and $Y$ be structures of the same vocabulary $\U$. An
\emph{isomorphism} from $X$ onto $Y$ is a one-to-one function $\zeta$
from the base set of $X$ onto the base set of $Y$ such that $f(\zeta
x_1,\dots,\zeta x_j)= \zeta x_0$ in $Y$ whenever $f(x_1,\dots,x_j)=x_0$
in $X$.

\begin{definition}[Abstract Transition System]\label{ATS}
An \emph{abstract transition system} is a dynamical system
whose states $\St$ are (first-order) structures over some finite vocabulary $\U$,
such that the following hold:
\begin{enumerate}
\item States are closed under isomorphism, so if $X\in \St$ is a state of the system, then any structure $Y$ isomorphic to $X$ is also a state in $\St$, and $Y$ is an initial state if $X$ is.
\item Transformations preserve the base set: that is, for every state $X \in \St$, for any $i \in \iota(X)$, $\varphi(X,i)$ has the same base set as $X$.


\item Transformations respect isomorphisms:   if $X\cong_\zeta Y$
  is an isomorphism of states $X,Y\in \St$,  then
   $\iota(X)=\iota(Y)$
   and for all $i \in \iota(X)$, $X_i \cong_\zeta Y_i$,
    where $X_i = \varphi(X,i)$, and $Y_i=\varphi(Y,i)$.

\end{enumerate}
\end{definition}

\begin{postulatep} \label{postulateats} An
  algorithm is an abstract
  transition system.
\end{postulatep}

When $\iota(X)$ is $\N$ (or order-isomorphic to $\N$) for all $X$, this corresponds precisely to the concepts
 introduced by \cite{Gur00}, considering that $\varphi(X,n) =
 \tau^{[n]}(X)$.


It is convenient to think of a structure $X$ as a memory of some kind: If
$f$ is a $j$-ary function symbol in vocabulary $\U$, and
$\overline{a}$ is a $j$-tuple of elements of the base set of $X$, then
the pair $\LOCATION{f}{\overline{a}}$ is called a
\emph{location}. We denote by $\vall{X}{f(\overline{a})}$ its interpretation
in $X$, i.e.\@ $\val{f}{X}(\overline{a})$.
If $\LOCATION{f}{\overline{a}}$ is a
location of $X$ and $b$ is an element of $X$ then
$\update{f}{\overline{a}}{b}$ is an \emph{update} of $X$.
When $Y$ and $X$ are structures over the same domain and vocabulary,
$Y\setminus X$ denotes the set of updates $\Delta^+=
  \{ \update{f}{\overline{a}}{\vall{Y}{f(\overline{a})}}  \mid  
\vall{Y}{f(\overline{a})} \neq  \vall{X}{f(\overline{a})}
\}.$

We want instantaneous evolution to be describable by updates:

\begin{definition} \label{defmachin}
An \emph{infinitesimal generator}   is \itm{a} a function $\Delta$ that  maps states $X$ to a set $\Delta(X)$ of updates,
and \itm{b} preserves
  isomorphisms: if $X\cong_\zeta Y$ is an isomorphism of states
  $X,Y\in \St$, then for all updates
  $\update{f}{\overline{a}}{b} \in \Delta(X)$, we have an isomorphic update
  $\update{f}{\overline{\zeta a}}{\zeta b} \in \Delta(Y)$.
\end{definition}

\nd{We write $\notContinuous{X}$ and say that $X$ is a \emph{jump} when $\notContinuous{0}$  in
timeline
$\iota(X)$; otherwise, we write $\Continuous{X}$ and say that it is a \emph{flow}.}
For states $X$ with
$\notContinuous{X}$, the following is natural:

\begin{definition} \label{defupdate}
The \emph{update generator} is the infinitesimal generator defined on
jump states $X$ as
$\Delta(X)=\Delta^+(X)$, where $\Delta^+(X)$ stands for
$\varphi(X,0^+) \setminus X$.
\end{definition}

To deal with flow states, we will also define some corresponding infinitesimal
generator $\Deltapsi$. Before doing so, let's see  how to go from semantics to
generators.


An \emph{initial evolution} over $S$ is a function 
whose
domain of definition is a timeline and whose range is $S$. 
An initial evolution is said to be \emph{initially constant} if it has
a constant prefix: that is to say, there is some $0<t$ such that
the function is constant over $[0\mathbin{..}t]$.

\begin{definition}[Semantics]
A \emph{semantics} $\psi$ over a class $\mathcal{C}$ of sets $S$
is a partial function mapping
initial evolutions over some $S \in \mathcal{C}$
to an element of $S$.
\end{definition}

\begin{remark} \label{rq:ex}
When $\TT=\R^{\geq 0}$, an example of
semantics over the class of sets $S$ containing $\R$ is the derivative $\psi_{\textrm{der}}$, mapping a function
$f$ to its
derivative at $0$ when that exists. When $\TT=\N$, an example of
semantics over the class of all sets would be the function $\psi_\N$ mapping
$f$ to $f(1)$. More generally, when $0 \in \TT$ is such that $\notContinuous{0}$, an example of
semantics over the class of all sets is the function $\psi_\N$ mapping
$f$ to $f(0^+)$.
\end{remark}

Consider a semantics $\psi$ over a class of sets $S$.
Let $X$ be a state whose domain is in the class and a location $\LOCATION{f}{\overline{a}}$
of $X$.  
%
Denote by $Evolution({X,\LOCATION{f}{\overline{a}}})$ the corresponding initial
evolution: that is to say, the
function given formally by $Evolution({X,\LOCATION{f}{\overline{a}})}: t \mapsto
\val{f(\overline{a})}{\varphi(X,t)}$ for $0 \le t \le I_1, t \in
\iota(X)$, for some $I_1 \in  \iota(X)$, with $I_1=0^+$ for a jump.
%
%
%
We use $\psi[X,f,\overline{a}]$ to denote the image of this evolution under $\psi$ (when it exists).

\begin{definition}[Infinitesimal generator associated with
  $\psi$] The infinitesimal generator associated with $\psi$,
 maps each state $X$, such that $\psi[X,f,\overline{a}]$ is defined for all locations, to
the set: $\Deltapsi(X) =\{ \update{f}{\overline{a}}{\psi[X,f,\overline{a}]} \mid  \LOCATION{f}{\overline{a}} \mbox{ is a location of
  $X$} ,~
Evolution({X,\LOCATION{f}{\overline{a}}}) \text{ is not initially constant} \}.$
\end{definition}

The update generator $\Delta^+$ (see Definition
\ref{defupdate}) is the infinitesimal generator associated with the
semantics $\psi_\N$ (of Remark \ref{rq:ex}) over flow states. 

From now on, we assume that some semantics $\psi$ is fixed to deal
with flow states. It could be
$\psi_{\textrm{der}}$, but it could also be another one (for example:
talking about integrals or built using infinitesimals as in \cite{rust2000hybrid}).  We denote by
$\Deltapsi$ the associated infinitesimal generator.

We are actually discussing algorithms relative to some
    $\psi$, and to be more precise, we should be refering to $\psi$-algorithms.
    The point is that not every infinitesimal generator is appropriate
    and that appropriateness is actually relative to a time domain and
    to the class of allowed dynamics over this time domain.
    \nd{To see this, keep in mind that -- when $\Deltapsi$ corresponds to derivative
    -- to be able to talk about
    derivatives, one implicitly restricts oneself to dynamics that are
    differentiable, hence non-arbitrary. In other words, one is restricting
    to a particular class of possible dynamics, and not all dynamics
    are allowed. Restricting to other classes of dynamics (for example, analytic
    ones) may lead to  different notions of algorithm.}


From the update generator $\Delta^+$ and $\Deltapsi$, we build a
generator also tagging states by the fact that they correspond to a
jump or a flow:

\begin{definition}[Generator of a State] \label{def:behaviord}
We define the \emph{tagged generator} of a state $X$, denoted $\gen{X}$, as a
function that maps state $X$ to $\{\CC\} \times \Deltapsi(X)$ when
$\Continuous{X}$ and $\Deltapsi(X)$ is defined and to $\{\JJ\} \times \Delta^+(X)$ when
$\notContinuous{X}$.
\end{definition}



Let $T$ be a set of ground terms.
We say that states $X$ and $Y$ \emph{coincide} over $T$,
if $\val{s}{X}=\val{s}{Y}$ for all $s\in T$.
This will be abbreviated $X =_{T} Y $.
%
The fact that  $X$ and $Y$ {coincide} over $T$ implies that $X$ and $Y$
necessarily share  some common elements in their
respective base sets,  at least all the $\val{s}{X}$ for $s \in T$.

An algorithm should have a finite imperative description. Intuitively, the evolution
of an algorithm from a given state is only determined by inspecting part of this
state by means of the terms appearing in the algorithm
description. The following
corresponds to the \emph{Bounded Exploration} postulate in
\cite{Gur00}.

\begin{postulatep} \label{postulatetrois}

For any algorithm, there exists a finite set $T$ of ground terms over
  vocabulary $\U$ such that for all states $X$ and $Y$ that coincide
  for $T$,  $\gen{X}$ and $\gen{Y}$ both exist and $\gen{X} = \gen{Y}$.
\end{postulatep}

A ground term of $T$ is a \emph{critical term} and a \emph{critical
  element} is the value (interpretation) of a critical term.


\begin{definition}[Analog Algorithm] \label{defalgo}
An \emph{algorithm} is an object satisfying Postulates I through III.
\end{definition}

\section{Characterization Theorem}


We now go on to define the rules of our programs (adding to those of ASM
programs  in \cite{Gur00}).

\begin{definition}
\begin{itemize}
\item \textbf{Update Rule:}  An \emph{update rule} of vocabulary $\U$ has the
  form $f(t_1,\dots,t_j):=t_0$ where $f$ is a $j$-ary function symbol
  in $\U$ and $t_1,\dots,t_j$ are ground terms over $\U$.


\item \textbf{Parallel Update Rule: }
If $R_1,\dots,R_k$ are update rules of vocabulary $\U$, then
\begin{center}
\begin{minipage}{8cm}
\begin{ttcode}
par \\
\s $R_1$ \\
\s   $R_2$ \\ 
\s $\dots$ \\ 
\s  $R_k$ \\
endpar \\ 
\end{ttcode}
\end{minipage}
\end{center}
is a \emph{parallel update rule} of vocabulary $\U$.
\end{itemize}
\end{definition}

$\gen{R_i,X}$ denotes the interpretation of a rule $R$ in state $X$
and is defined as expected: If $R$ is an update rule $f (t_1,\dots,t_j) := t_0$ then
$\gen{R,X} = \J{\update{f}{{(\val{t_i}{X},\dots,\val{t_j}{X})}}{\val{t_0}{X}}}$
and when $R$ is $\text{\tt par }R_1,\dots,R_k\ \text{\tt endpar}$ then $\gen{R,X} = \J{(d_1 \cup \dots \cup d_k)}$ where $\gen{R_i,X} = \J{d_i }$ for all $i$.

Next, we introduce rules to deal with $\mathit{Flows}$.

\begin{definition}
\begin{itemize}
\item \textbf{Basic Continuous Rule:}
  A \emph{basic continuous rule} of vocabulary $\U$ has the
  form $\Machine(f (t_1,\dots,t_j) , t_0) $ where $f$ is a symbol of arity  $j$ and $t_0,t_1,\dots,t_j$ are ground terms of vocabulary $\U$.


\item \textbf{Flow Rule:}
If $R_1,\dots,R_k$ are basic continuous rules of vocabulary $\U$, then
  \begin{center}
  \begin{minipage}{8cm}
\begin{ttcode}
flow \\ 
\s $R_1$ \\ 
\s  $R_2$ \\
\s  $\dots$ \\ 
\s $R_k$ \\ 
endflow \\ 
\end{ttcode}
  \end{minipage}
   \end{center}
is a \emph{flow rule} of vocabulary $\U$.
\end{itemize}
\end{definition}

Their semantics are then defined as follows.
If $R$ is a  basic continuous rule $\Machine(f (t_1,\dots,t_j) ,t_0)$ then
$\gen{R,X} = \C{\{\update{f}{{(a_1,\dots,a_j)}}{a_0}\}}$
where each $a_i= \val{t_i}{X}$.
If $R$ is a  flow rule with constituents $R_1,\dots,R_k$, then
$\gen{R,X} = \C{(d_1 \cup \dots \cup d_k)}$
where $\gen{R_i,X} = \C{d_i }$.

Finally, we allow conditionals:

\begin{definition}
\begin{itemize}
\item \textbf{Selection Rule:}
If $\varphi$ is a ground boolean term over vocabulary $\U$ and
  $R_1$ and $R_2$ are  rules of vocabulary $\U$ then:
   \begin{center}
   \begin{minipage}{8cm}
    \begin{ttcode}
      if $\varphi$ then \\
     \s $R_1$ \\ 
      else\\
      \s $R_2$ \\ 
      endif \\
    \end{ttcode}
   \end{minipage}
   \end{center}
is a rule of vocabulary $\U$.
\end{itemize}
\end{definition}


Given such a rule $R$ and a state $X$, if $\varphi$ evaluates to
 \textit{true} (the interpretation of scalar function \textit{true}) in  $X$ then
$\gen{R,X} = \gen{R_1,X}$ else $\gen{R,X} = \gen{R_2,X}$.

An ASM program of vocabulary $\U$ is a rule of vocabulary $\U$.
The first key result is the following, which can be seen as a
completeness result.

\begin{theorem}[Completeness] \label{mlemma} \nd{For every algorithm of vocabulary $\U$, there
  is an ASM program $\Pi$ over  $\U$ with the identical behavior:
 $\gen{\Pi,X} = \gen{X}$  for all states $X$.}
 \end{theorem}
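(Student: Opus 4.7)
The plan is to adapt Gurevich's classical completeness proof to our hybrid setting; the only genuinely new ingredient is that the constructed program must dispatch between flow rules and parallel update rules according to the state. By Postulate~III I fix, once and for all, a finite set $T$ of critical ground terms that determines the tagged generator $\gen{X}$ on every state $X$.

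The technical heart of the argument is a \emph{critical elements lemma}: every update $\update{f}{\overline{a}}{b}$ occurring in $\Delta^+(X)$ or $\Deltapsi(X)$ uses only critical elements, i.e., each $a_i$ and $b$ equals $\val{s}{X}$ for some $s\in T$. I would prove this by contradiction. If some component were non-critical, I use closure under isomorphism (Postulate~II, item~1) to build a state $Y$ isomorphic to $X$ via a map $\zeta$ that fixes every critical element but sends the offending component to a fresh element of a suitably enlarged base set. Then $X =_T Y$, so Postulate~III yields $\gen{X} = \gen{Y}$; yet the isomorphism clause of Definition~\ref{defmachin} forces every update in $\gen{Y}$ to be the $\zeta$-image of one in $\gen{X}$, producing an update referring to the fresh element, which cannot already lie in $\gen{X}$ — a contradiction.

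I would then show that $\gen{X}$ is determined by the \emph{equality type} $E_X = \{(s,t)\in T\times T : \val{s}{X}=\val{t}{X}\}$. Given states $X$ and $Y$ with $E_X = E_Y$, isomorphism closure provides an isomorphic copy $X'$ of $X$ whose critical values literally coincide with those of $Y$; then $X' =_T Y$, so Postulate~III equates their tagged generators, and Postulate~II transfers the conclusion back to $X$. Since $T$ is finite, only finitely many equality types arise, and the $\CC$/$\JJ$ tag is constant on each class because Postulate~III equates the full tagged object, including its tag.

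For each equality type $\tau$ I pick a representative and, by the critical elements lemma, rewrite every update of its generator syntactically as $\update{f}{(t_1,\dots,t_j)}{t_0}$ with all $t_i\in T$. If $\tau$ is a jump class these updates are packaged as a parallel update rule; if a flow class, as a flow rule with one basic continuous rule per update. The program $\Pi$ is then a nested selection rule whose Boolean conditions are conjunctions of (dis)equalities among critical terms characterising each $\tau$, and whose branches are the rules just built; $\gen{\Pi,X}=\gen{X}$ for all $X$ holds by construction. The main obstacle is the critical elements lemma — specifically manufacturing $\zeta$ from a possibly element-poor domain (I will need to invoke isomorphism closure to pass to an extended structure with spare elements) and checking that the argument goes through uniformly for both $\Delta^+$ and $\Deltapsi$, notwithstanding the asymmetric ``initially constant'' clause in the definition of $\Deltapsi$, which must be shown to be invariant under the chosen isomorphism.
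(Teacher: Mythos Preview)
Your proposal is correct and follows essentially the same route as the paper: a critical-elements lemma for both $\Delta^+$ and $\Deltapsi$ (the paper's Lemmas~\ref{lem1} and~\ref{lem2}), then reduction to finitely many $T$-equality types via intermediate isomorphic copies (the paper's Lemmas~\ref{sixsept}--\ref{lemma:lem62gur}), and finally a nested selection rule dispatching to parallel-update or flow rules. Your worry about ``element-poor domains'' is unnecessary---the paper simply replaces the offending $a_k$ by a fresh element $b$ chosen \emph{outside} the base set of $X$, yielding an isomorphic (same-cardinality) structure $Y$ in which $a_k$ no longer occurs; no enlargement or spare internal elements are needed.
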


\section{Proof of Theorem \ref{mlemma}}

Before turning to the proof of our main theorem, we reformulate and
extend several of the constructions in \cite{Gur00}.


\begin{lemma}[{\cite[Lemma 6.2]{Gur00}}] \label{lem1}
Consider an algorithm $A$, consider a state $X$ of $A$ and assume
$\notContinuous{X}$. By definition,  $\gen{X}=\J{\Delta^+(X)}$.

Consider $\update{f}{a_1,\dots,a_j}{a_0}$,
  an update of $\Delta^+(X)$.  Then all elements $a_0,a_1,\dots,a_j$
  are critical elements of $X$, that is, they correspond to values
  (interpretations) of critical terms.
\end{lemma}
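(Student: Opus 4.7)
My plan is to proceed by contradiction, following Gurevich's original argument adapted to our setting. Suppose that some element $e$ among $a_0, a_1, \dots, a_j$ is not critical, so $e \neq \val{t}{X}$ for every $t \in T$. I will exhibit a state $Y$ of the algorithm that coincides with $X$ on every critical term yet whose base set excludes $e$, and the bounded-exploration postulate will then yield a contradiction with the presence of $e$ in an update of $\Delta^+(X)$.

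To build $Y$, I pick a fresh element $e' \notin S$ (where $S$ is the base set of $X$; any sufficiently large ambient universe supplies one) and take $Y$ to be the structure on base set $(S \setminus \{e\}) \cup \{e'\}$ obtained by transporting the interpretations of $X$ along the bijection $\zeta : S \to (S \setminus \{e\}) \cup \{e'\}$ that sends $e \mapsto e'$ and fixes every other element. By construction $\zeta$ is an isomorphism $X \cong_\zeta Y$. Closure of states under isomorphism (Definition~\ref{ATS}, condition~(1)) makes $Y$ a state, and condition~(3) gives $\iota(Y) = \iota(X)$, so $\notContinuous{Y}$ still holds and $\gen{Y} = \J{\Delta^+(Y)}$. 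Since $e$ is non-critical, every critical term evaluates in $X$ to a point that $\zeta$ leaves fixed, so $\val{t}{Y} = \zeta(\val{t}{X}) = \val{t}{X}$ for all $t \in T$, giving $X =_T Y$. Postulate~\ref{postulatetrois} then forces $\gen{X} = \gen{Y}$, and in particular $\Delta^+(X) = \Delta^+(Y)$.

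The contradiction is then immediate: the update $\update{f}{(a_1,\dots,a_j)}{a_0}$ lies in $\Delta^+(X)$, hence in $\Delta^+(Y) = \varphi(Y,0^+) \setminus Y$; but every update in $\Delta^+(Y)$ has all of its arguments and its right-hand side in the base set of $Y$, which by construction does not contain $e$, whereas $e$ occurs somewhere among $a_0, \dots, a_j$. We conclude that every $a_i$ must be the value in $X$ of some critical term.

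The main delicate point I expect is the interaction between Postulate~\ref{postulatetrois} and states defined on different base sets: I need to be sure that the equality $\gen{X} = \gen{Y}$ given by the postulate is a literal equality of sets of updates, so that the absence of $e$ from the base set of $Y$ really rules the offending update out. Since the lemma is essentially Gurevich's, and our jump postulates subsume his discrete-time ones on states with $\notContinuous{X}$, I expect no new difficulty: choosing the fresh $e'$, handling the case where $e$ occurs in several positions of the update, and transferring the jump tag from $X$ to $Y$ are all routine.
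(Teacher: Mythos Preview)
Your proof is correct and follows essentially the same argument as the paper: both proceed by contradiction, replace the non-critical element by a fresh one via an isomorphism, use coincidence on $T$ together with Postulate~\ref{postulatetrois} to equate the tagged generators, and obtain a contradiction because the offending element is absent from the base set of~$Y$. The only cosmetic difference is that the paper derives $\notContinuous{Y}$ from Postulate~\ref{postulatetrois} (via the $\JJ$-tag in $\gen{X}=\gen{Y}$) rather than from the isomorphism clause of Definition~\ref{ATS}, and it phrases the final step as ``$a_k$ does not occur in $\varphi(Y,0^+)$''; both routes are equivalent.
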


\begin{proof}
The proof proceeds by contradiction. Assume that some $a_k$ does not correspond to the
value of any critical term.  One can easily consider a structure $Y$ isomorphic to $X$ which is obtained from $X$ by
replacing $a_k$ with a fresh element $b$.
By Postulate
\ref{postulateats}, $Y$ is a state and $\val{t}{Y}=\val{t}{X}$ for every
 critical term $t$.
By Postulate \ref{postulatetrois}, we know that
$\notContinuous{Y}$, and we must have:
$\gen{Y}=\J{\Delta^+(Y)}=\J{\Delta^+(X)}$.
By Postulate \ref{postulateats},
$a_k$ does not occur in  (the base set of)  $\varphi(Y,0^+)$ either. Hence, it cannot occur in
$\Delta^+(Y)= \varphi(Y,0^+) \setminus Y$. This gives the desired
contradiction.
\end{proof}

\begin{lemma}[Generalization of {\cite[Lemma 6.2]{Gur00}}] \label{lem2}
Consider an algorithm $A$ and assume $\Continuous{X}$. Then 
by definition
$\gen{X}=\C{\Deltapsi(X)}$.

Consider $\update{f}{a_1,\dots,a_j}{a_0}$,
an element of $\Deltapsi(X)$.  Then all elements $a_0,a_1,\dots,a_j$
are critical elements of $X$, that is, they correspond  to values of critical terms.
\end{lemma}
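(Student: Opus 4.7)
My plan is to mirror the proof of Lemma \ref{lem1} almost verbatim, with only two substantive differences: replacing the jump generator $\Delta^+$ with the flow generator $\Deltapsi$, and invoking base-set preservation along the entire trajectory rather than merely at time $0^+$.

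I start by assuming for contradiction that some $a_k$ ($k\in\{0,1,\dots,j\}$) fails to be a critical element of $X$. I then pick a fresh element $b$ outside the base set of $X$ and let $Y$ be the structure isomorphic to $X$ obtained by replacing $a_k$ with $b$, via some $\zeta$. By Postulate \ref{postulateats}(1), $Y$ is a state of $A$, and by clause (3) of the same postulate $\iota(Y)=\iota(X)$, so $Y$ is still a flow state. Because $\zeta$ fixes every critical element, $X=_T Y$; the bounded exploration Postulate \ref{postulatetrois} then yields $\gen{X}=\gen{Y}$, whence $\Deltapsi(X)=\Deltapsi(Y)$. In particular, the update $\update{f}{a_1,\dots,a_j}{a_0}$ of interest belongs to $\Deltapsi(Y)$.

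To reach a contradiction I argue that $a_k$ cannot occur in any update of $\Deltapsi(Y)$. By Postulate \ref{postulateats}(2), each $\varphi(Y,t)$ along $Y$'s trajectory has the same base set as $Y$, which does not contain $a_k$. Hence every evolution $Evolution(Y,\LOCATION{f}{\overline{a'}})$ takes values in the base set of $Y$, and the semantics $\psi$ applied to such an evolution produces an element of that base set; consequently every update in $\Deltapsi(Y)$ mentions only elements of the base set of $Y$. But the update $\update{f}{a_1,\dots,a_j}{a_0}$ derived above involves $a_k$, which lies outside — contradiction.

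The one place the argument parts ways with Lemma \ref{lem1} is this last step: I must propagate base-set preservation along a whole trajectory and rely on the codomain of $\psi$ being contained in the ambient base set. I expect this to be immediate for the semantics the paper works with ($\psi_{\textrm{der}}$, $\psi_\N$, etc.), so it should be routine rather than a genuine obstacle.
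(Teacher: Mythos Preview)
Your proposal is correct and follows essentially the same route as the paper: contradiction via a fresh-element swap, coincidence on critical terms, Postulate~\ref{postulatetrois} to get $\Deltapsi(X)=\Deltapsi(Y)$, and then the observation that $a_k$ is absent from the base set of $Y$. The only difference is in the last step: the paper simply invokes Definition~\ref{defmachin} (an infinitesimal generator outputs \emph{updates of $Y$}, whose components are by definition elements of the base set of $Y$), whereas you unroll this by tracing base-set preservation along the trajectory and using that a semantics $\psi$ returns an element of the ambient $S$; these are the same fact at two levels of detail, not genuinely different arguments.
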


\begin{proof}
The proof proceeds by contradiction. Assume that some $a_k$ does not correspond to the
value of any critical term.  One can easily consider a structure $Y$ isomorphic to $X$ which is obtained from $X$ by
replacing $a_k$ with a fresh element $b$.

By Postulate
\ref{postulateats}, $Y$ is a state. Observe that $\val{t}{Y}=\val{t}{X}$ for every
 critical term $t$.

By Postulate \ref{postulatetrois}, we know that
$\Continuous{Y}$, and we must have:
$$\gen{Y}=\C{\Deltapsi(Y)}=\C{\Deltapsi(X)}.$$
By Postulate \ref{postulateats},
$a_k$ does not occur in  (the base set of)  $Y$. Hence it cannot occur in
$\Deltapsi(Y)$, since by Definition \ref{defmachin} elements in $\Deltapsi(Y)$ are
elements of the base set of $Y$. This gives the desired
contradiction.
\end{proof}


The following follows directly from Lemmas \ref{lem1} and \ref{lem2}.

\begin{corollary}[Corollary 6.6 of \cite{Gur00}]
For every state $X$, there exists a
rule $R^X$ such that $\gen{X}=\gen{R^X,X}$.

\end{corollary}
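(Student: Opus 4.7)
The plan is to exhibit $R^X$ explicitly, treating the jump and flow cases separately and using Lemmas~\ref{lem1} and~\ref{lem2} to replace raw elements by critical terms.

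First I would observe that the set of updates in $\gen{X}$ is finite. Indeed, by Postulate~III the set $T$ of critical terms is finite, hence the set of critical elements $\{\val{t}{X} : t \in T\}$ is finite. By Lemma~\ref{lem1} (in the jump case) or Lemma~\ref{lem2} (in the flow case), every component $a_0, a_1, \dots, a_j$ of every update in $\Delta^+(X)$ or $\Deltapsi(X)$ is a critical element. Since the vocabulary $\U$ is finite and each location $\LOCATION{f}{\overline{a}}$ in an update has a bounded-arity symbol $f \in \U$ and critical components, there are only finitely many possible updates, so $\Delta^+(X)$ (resp.\ $\Deltapsi(X)$) is finite.

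Next I would choose, for each update $\update{f}{a_1,\dots,a_j}{a_0}$ appearing in the generator, critical terms $t_0, t_1, \dots, t_j \in T$ such that $\val{t_i}{X} = a_i$ for $0 \le i \le j$; such terms exist by the two lemmas. Then I would set
$$
R^X \;=\;
\begin{cases}
\texttt{par}\ f_1(t^{(1)}_1,\dots,t^{(1)}_{j_1}) := t^{(1)}_0\ \cdots\ f_k(t^{(k)}_1,\dots,t^{(k)}_{j_k}) := t^{(k)}_0\ \texttt{endpar} & \text{if } \notContinuous{X},\\[4pt]
\texttt{flow}\ \Machine(f_1(t^{(1)}_1,\dots,t^{(1)}_{j_1}),t^{(1)}_0)\ \cdots\ \Machine(f_k(t^{(k)}_1,\dots,t^{(k)}_{j_k}),t^{(k)}_0)\ \texttt{endflow} & \text{if } \Continuous{X},
\end{cases}
$$
where the listed updates enumerate $\Delta^+(X)$ (resp.\ $\Deltapsi(X)$).

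Finally I would verify $\gen{R^X,X} = \gen{X}$ by unfolding the rule semantics from the definitions. In the jump case, the parallel update rule evaluates in $X$ to $\{\JJ\} \times \{\update{f_\ell}{(\val{t^{(\ell)}_1}{X},\dots,\val{t^{(\ell)}_{j_\ell}}{X})}{\val{t^{(\ell)}_0}{X}} : 1 \le \ell \le k\}$, which by our choice of terms equals $\{\JJ\} \times \Delta^+(X) = \gen{X}$; the flow case is identical with $\CC$ in place of $\JJ$ and $\Deltapsi$ in place of $\Delta^+$. The main conceptual step, already done by the two lemmas, is ensuring that all components of every update are nameable by ground critical terms; the remaining work is purely syntactic bookkeeping over a finite enumeration.
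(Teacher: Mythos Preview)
Your argument is correct and is exactly the construction the paper has in mind: the paper merely states that the corollary ``follows directly from Lemmas~\ref{lem1} and~\ref{lem2}'' without spelling out any details, and you have supplied precisely those details---finiteness of the update set via finiteness of $T$ and $\U$, the choice of critical terms naming each update component, and the explicit \texttt{par}/\texttt{flow} rule whose semantics reproduce $\gen{X}$.
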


We now generalize some of the other lemmas from \cite{Gur00} to apply to our
more general setting.

\begin{lemma}[Generalization of {\cite[Lemma 6.7]{Gur00}}]
\label{sixsept}
If states $X$ and $Y$ coincide over the set $T$ of critical terms,
then:
$$\gen{R^X,Y} = \gen{Y}.$$
\end{lemma}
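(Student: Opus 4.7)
The plan is to chain three equalities:
\[
\gen{R^X,Y} \;=\; \gen{R^X,X} \;=\; \gen{X} \;=\; \gen{Y}.
\]
The middle equality is exactly the content of the preceding corollary, and the rightmost is Postulate~\ref{postulatetrois} applied to the hypothesis $X =_T Y$ (which also guarantees that $\gen{Y}$ is defined and carries the same tag $\CC$ or $\JJ$ as $\gen{X}$). The only equality that demands real work is the leftmost one.

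For that, I would first unpack the construction of $R^X$ from the previous corollary. By Lemmas~\ref{lem1} and \ref{lem2}, every element $a$ that occurs in an update of $\gen{X}$ is a critical element, hence equal to $\val{t}{X}$ for some critical term $t \in T$. Consequently $R^X$ can be (and, in the proof of the corollary, is) written as a rule whose update components are of the form $f(t_1,\dots,t_j):=t_0$ or $\Machine(f(t_1,\dots,t_j),t_0)$ with every $t_i \in T$, optionally wrapped inside a \texttt{par}/\texttt{flow} block and guarded by conditionals whose boolean tests are also ground terms over $\U$ (in fact built from $T$).

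Next, I would argue by structural induction on rule form that $\gen{R,Z}$ depends on $Z$ only through the interpretations in $Z$ of the ground terms syntactically occurring in $R$. The base cases are the update rule and the basic continuous rule, which evaluate their constituent terms once in $Z$; the inductive cases (parallel, flow, conditional) simply combine the children's outputs, and in the conditional case the branch chosen is determined by the interpretation of $\varphi$. Applying this observation to $R^X$ with $Z=X$ and $Z=Y$, and using $\val{t}{X}=\val{t}{Y}$ for every $t\in T$, yields $\gen{R^X,Y}=\gen{R^X,X}$, completing the chain.

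The main obstacle is really bookkeeping rather than substance: one must make sure the explicit form of $R^X$ chosen in the corollary truly uses only terms from $T$ (which is why Lemmas~\ref{lem1} and \ref{lem2} are invoked to name \emph{all} elements appearing in updates, not just the outputs), and one must also verify that the $\CC/\JJ$ tag produced by evaluating $R^X$ matches the tag of the state. The latter is automatic because the shape of $R^X$ (flow rule vs.\ parallel update rule) is fixed once and for all by whether $X$ is a flow or a jump, and Postulate~\ref{postulatetrois} forces $Y$ to be of the same kind as $X$.
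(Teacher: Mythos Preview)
Your proof is correct and follows exactly the same chain of equalities as the paper's proof, with the same justifications: the first equality because $R^X$ is built only from critical terms (whose values agree in $X$ and $Y$), the second by the defining property of $R^X$ from the preceding corollary, and the third by Postulate~\ref{postulatetrois}. Your additional elaboration on the structural induction and the $\CC/\JJ$ tag is sound but more detailed than what the paper actually spells out.
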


\begin{proof}
We have $\gen{R^X,Y}=\gen{R^X,X}=\gen{X}=\gen{Y}$.
The first equality holds because $R^X$ involves only critical terms
and because critical terms have the same values in $X$ and $Y$. The
second equality holds by the definition of $R^X$ (that is to say, this
is the previous corollary). The third equality
holds because of the choice of $T$ and because $X$ and $Y$ coincide
over $T$.
\end{proof}

\begin{lemma}[Generalization of {\cite[ Lemma 6.8]{Gur00}}] \label{prevle}
Suppose that $X,Y$ are states and that $\gen{R^X,Z}=\gen{Z}$ for some
state $Z$ isomorphic to $Y$ then:
$$\gen{R^X,Y}=\gen{Y}.$$
\end{lemma}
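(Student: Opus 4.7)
The plan is to exploit the isomorphism-invariance of both sides of the claimed equality: the right-hand side $\gen{Y}$ is invariant because our algorithm is an abstract transition system (Postulate \ref{postulateats}) and $\Deltapsi$ is an infinitesimal generator (Definition \ref{defmachin}), while the left-hand side $\gen{R^X,Y}$ is invariant because rule evaluation is purely syntactic. Let $\zeta$ be an isomorphism from $Y$ onto $Z$. I will show that $\zeta$ transports $\gen{Y}$ to $\gen{Z}$ and $\gen{R^X,Y}$ to $\gen{R^X,Z}$; then the hypothesis $\gen{R^X,Z}=\gen{Z}$ together with the injectivity of $\zeta$ on the base set will yield the result.

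First I would unfold what $\zeta$ does to $\gen{Y}$. If $\notContinuous{Y}$, then since transformations respect isomorphisms (Postulate \ref{postulateats}, item 3), $\varphi(Y,0^+)$ and $\varphi(Z,0^+)$ are $\zeta$-isomorphic; applying $\zeta$ element-wise to the updates in $\varphi(Y,0^+)\setminus Y$ produces exactly $\varphi(Z,0^+)\setminus Z$, so $\zeta(\Delta^+(Y))=\Delta^+(Z)$. If $\Continuous{Y}$, the same conclusion holds because $\Deltapsi$ is by definition required to preserve isomorphisms (Definition \ref{defmachin}(b)). In both cases $\zeta$ also preserves the tag $\JJ$ or $\CC$ (the tag is determined by $\iota$, and isomorphic states share the same timeline by Postulate \ref{postulateats}, item 3), so $\gen{Z}$ is the $\zeta$-image of $\gen{Y}$.

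Next I would do the analogous bookkeeping for the rule side. The rule $R^X$ is a parallel update / flow / conditional rule built only from critical (ground) terms. For any ground term $t$ and any isomorphism $\zeta:Y\to Z$ one has $\val{t}{Z}=\zeta(\val{t}{Y})$, a standard induction on term structure. Unfolding the definition of $\gen{R^X,\cdot}$ for each rule constructor (update, parallel, basic continuous, flow, conditional — the conditional test being a Boolean term whose value is likewise preserved by $\zeta$, noting that $\textit{true}$ is interpreted identically in isomorphic structures), this gives $\gen{R^X,Z}=\zeta(\gen{R^X,Y})$, including the tag $\JJ$ or $\CC$.

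Combining these two transport equalities with the hypothesis yields $\zeta(\gen{R^X,Y})=\gen{R^X,Z}=\gen{Z}=\zeta(\gen{Y})$. Because $\zeta$ is a bijection on the common base set of $Y$ and $Z$, it induces a bijection on the set of tagged updates, so we may strip $\zeta$ and conclude $\gen{R^X,Y}=\gen{Y}$. The main obstacle is purely bookkeeping: one must carefully verify isomorphism-invariance in both the jump and flow cases — which is where Postulate \ref{postulateats}(3) is needed for $\gen{Y}$ and Definition \ref{defmachin}(b) is needed for $\Deltapsi$ — and check that tags, as well as values, are preserved.
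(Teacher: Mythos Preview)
Your proof is correct and follows essentially the same route as the paper: pick an isomorphism $\zeta:Y\to Z$, check that $\zeta$ transports $\gen{R^X,Y}$ to $\gen{R^X,Z}$ (syntactic evaluation of ground terms) and $\gen{Y}$ to $\gen{Z}$ (isomorphism-preservation of generators), then use the hypothesis and apply $\zeta^{-1}$. You are simply more explicit than the paper about the jump/flow case split and the preservation of tags, whereas the paper compresses this into a single appeal to Definition~\ref{defmachin}.
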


\begin{proof}
Let $\zeta$ be an isomorphism from $Y$ onto an appropriate $Z$. Extend $\zeta$ to tuples,
locations, updates and set of updates. It is easy to check that $\zeta(\gen{R^X,Y}) = \gen{R^X,Z}$.
By the choice of $Z$, $\gen{R^X,Z} = \gen{A,Z}$.

By Definition \ref{defmachin}, generators preserve isomorphisms,
thus $\gen{A, Z} = \zeta(\gen{A, Y })$ and then $\zeta(\gen{R^X , Y}) = \zeta(\gen{A, Y})$.
It remains to apply $\zeta^{-1}$ to both sides of the last equality.
\end{proof}

At each state $X$, the equality relation between critical elements
induces an equivalence relation
$$E_X(t_1,t_2) \mbox{ iff } \val{t_1}{X}=\val{t_2}{X}$$
over critical terms.

States $X$ and $Y$ are $T$-similar if $E_X=E_Y$.

\begin{lemma}[Generalization of {\cite[Lemma 6.9]{Gur00}}]\label{lemma:lem62gur}
Let $X$ be a state. Then,  for every
state $Y$ that is  $T$-similar to $X$, we have:
$$\gen{R^X,Y}=\gen{Y}.$$
\end{lemma}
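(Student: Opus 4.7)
The plan is to follow the template of Gurevich's original Lemma 6.9, using the two preceding lemmas (Lemma \ref{sixsept} and Lemma \ref{prevle}) as the two legs of the argument: build an intermediate state $Z$ that is simultaneously isomorphic to $Y$ and coincides with $X$ on the set $T$ of critical terms. Once such a $Z$ exists, Lemma \ref{sixsept} gives $\gen{R^X,Z}=\gen{Z}$ because $Z$ and $X$ agree on all critical terms, and then Lemma \ref{prevle} transports this equality from $Z$ to $Y$ since $Z\cong Y$, yielding the desired $\gen{R^X,Y}=\gen{Y}$.

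The first step is therefore the construction of $Z$. I would use $T$-similarity to define a map $\pi$ on the set of critical elements of $Y$ by sending $\val{t}{Y}\mapsto\val{t}{X}$ for each $t\in T$. The hypothesis $E_X=E_Y$ is exactly what is needed to show that $\pi$ is well-defined (if $\val{t_1}{Y}=\val{t_2}{Y}$ then $E_Y(t_1,t_2)$, so $E_X(t_1,t_2)$, so $\val{t_1}{X}=\val{t_2}{X}$) and injective (by the symmetric argument). Thus $\pi$ is a bijection between the critical elements of $Y$ and the critical elements of $X$.

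Next I would extend $\pi$ to a bijection $\zeta$ from the base set of $Y$ onto some new base set $B$ by mapping each non-critical element of $Y$ to a fresh element, chosen disjoint from the critical elements of $X$, so that $\zeta$ remains injective. Define $Z$ as the unique structure on $B$ for which $\zeta:Y\to Z$ is an isomorphism (transport the interpretation of every function symbol of $\U$ across $\zeta$). By closure of $\St$ under isomorphism (item (1) of Definition \ref{ATS}), $Z$ is a state. By construction, for every critical term $t$ we have $\val{t}{Z}=\zeta(\val{t}{Y})=\pi(\val{t}{Y})=\val{t}{X}$, so $Z=_{T}X$. Lemma \ref{sixsept} then yields $\gen{R^X,Z}=\gen{Z}$, and Lemma \ref{prevle}, applied to $X$, $Y$, and the isomorphic witness $Z$, gives $\gen{R^X,Y}=\gen{Y}$.

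The only delicate point is really the construction of $\zeta$: one must be careful that $\pi$ is a genuine bijection (which is precisely the content of $T$-similarity) and that the fresh elements used to extend it do not accidentally collide with critical elements of $X$. Everything else is an immediate invocation of the two preceding lemmas and of the isomorphism-closure clause of Postulate \ref{postulateats}. In particular, no new use of Postulate \ref{postulatetrois} is needed here beyond what is already built into Lemmas \ref{sixsept} and \ref{prevle}.
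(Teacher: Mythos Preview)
Your proposal is correct and follows essentially the same approach as the paper: construct a state $Z$ isomorphic to $Y$ that coincides with $X$ on all critical terms, then invoke Lemma~\ref{sixsept} followed by Lemma~\ref{prevle}. The only cosmetic difference is that the paper builds $Z$ in two stages (first passing to a $Z_1$ disjoint from $X$, then replacing critical elements to get $Z_2$), whereas you construct the bijection $\zeta$ directly in one step; both handle the collision issue you flag as ``the only delicate point''.
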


\begin{proof}
Replace every element of $Y$ that
belongs to $X$ with a fresh element. This gives a structure $Z_1$ that
is isomorphic to $Y$ and disjoint from $X$. By Postulate
\ref{postulateats}, $Z_1$ is a state. Since $Z_1$ is isomorphic to $Y$,
it is $T$-similar to $Y$ and therefore $T$-similar to $X$.

Let $Z_2$ be the structure isomorphic to $Z_1$ that is obtained
from $Z_1$ by replacing $\val{t}{Y}$ with $\val{t}{X}$ for all critical
term $t$ (the definition of $Z_2$ is coherent because $X$ and $Z_1$ are
$T$-similar). By Lemma \ref{sixsept}, we have $\gen{R^X,Z_2}=\gen{Z_2}$.

Since $Z_2$ is isomorphic to $Z_1$ isomorphic to $Y$, then $Z_2$ is isomorphic to $Y$
and by Lemma \ref{prevle}, we conclude $\gen{R^X,Y}=\gen{Y}$.
\end{proof}

By previous Lemma \ref{lemma:lem62gur}, for every state $X$, there exists a boolean term $\varphi^X$ that
evaluates to $\True$ in a structure $Y$ if and only if $Y$ is
$T$-similar to $X$. Indeed, the desired term asserts that the equality
relation on the critical terms is exactly the equivalence relation
$E_X$.

Since there are only finitely many critical terms, there are
only finitely many possible equivalence relations $E_X$. Hence there
exists a finite set $\{X_1,\dots,X_m,Y_1,\dots,Y_n\}$ of states such that every
state is $T$-similar to one of the state $X_i$ or $Y_i$, and such that
 $\notContinuous{X_i}$ and $\Continuous{Y_i}$ for all $i$ (recall that the
property of being $\Continuous$ is preserved by $T$-similarity
from the previous lemma).  States  $\{X_1,\dots,X_m,Y_1,\dots,Y_n\}$ can
be chosen mutually exclusive, that is to say in different equivalence
relations. Boolean terms $(\varphi^{X_i})_i$ and $(\varphi^{Y_i})_i$  then realize a partition of the
set of states.

We can then go to the proof of Theorem \ref{mlemma}: 
Let $X_1,\dots,X_m, Y_1, \dots, Y_n$ be as above. 
The desired $\Pi$ is
\begin{center}
  \begin{minipage}{5cm}
\begin{ttfamily}
\noindent
   if $\varphi^{X_1}$ then\\  \s $R^{X_1}$\\  else \\
   \s if $\varphi^{X_2}$ then\\ \s\s  $R^{X_2}$ \\ \s else \\
\s\s\s\s  $\dots$ \\
\s\s\s\s    if $\varphi^{X_m}$ then \\
\s\s\s\s \s $R^{X_m}$ \\
\s\s\s\s  else \\
\s\s\s\s \s\s
   if $\varphi^{Y_1}$ then\\
\s\s\s\s \s\s\s $R^{Y_1}$ \\
\s\s\s\s  \s\s else \\
\s\s\s\s \s\s\s    if $\varphi^{Y_2}$ then \\
\s\s\s\s \s\s\s\s $R^{Y_2} $\\
\s\s\s\s \s\s\s  else  \\
\s\s\s\s \s\s\s\s\s\s\s  $\dots$ \\
\s\s\s\s \s\s\s\s \s\s\s   if $\varphi^{Y_{n-1}}$ then\\
\s\s\s\s \s\s\s\s \s\s\s\s  $R^{Y_{n-1}}$
\s\s\s\s \s\s\s\s \s\s\s else
\s\s\s\s \s\s\s\s \s\s\s\s  $R^{Y_{n}}$
\s\s\s\s \s\s\s\s \s\s\s endif \\
\s\s\s\s \s\s\s endif \\
\s\s\s\s  \s\s endif\\
\s\s\s\s endif\\
\s endif\\
endif
\end{ttfamily}
\end{minipage}
\end{center}
where the $R^{X_i}$ are (possibly parallel) update  rules, and the $R^{Y_i}$ are flow
rules.

\section{Extended Statements}

We are now very close to formulating our other theorems. First we define an
abstract state machine relative to semantics $\psi$.

\begin{definition} A $\psi$-abstract state machine $B$ comprises the following:
\itm{a} an ASM program $\Pi$;
\itm{b} a set $\St$ of (first-order) structures over some finite
    vocabulary $\U$ closed under isomorphisms, and a subset $\In \subseteq \St$ closed under isomorphisms;
\itm{c} a map $\iota$ and a map $\varphi$ such that  $\langle \St, \In, \iota, \varphi 
\rangle$ is an algorithm, where $\Deltapsi$ is fixed to be the infinitesimal
generator associated with $\psi$, and
for all
states $X$ in $\St$,
$\gen{\Pi,X} = \gen{X}$.
\end{definition}

By definition, a  $\psi$-abstract state machine $B$ satisfies all the
postulates and hence is an algorithm.

\begin{definition}
An ASM program $\Pi$ is \emph{$\psi$-solvable} for a set $\St$ of (first-order) structures over some finite
    vocabulary $\U$ closed under isomorphisms and a subset $\In
    \subseteq \St$ closed under isomorphisms if there exists a unique
    $\iota$ and $\varphi$ such that $(\Pi,\St,\In,\iota,\varphi)$ is a $\psi$-abstract
    state machine.
\end{definition}

\begin{definition}
A semantics $\psi$ is \emph{unambiguous} if for all sets $\St$ of (first-order) structures over some finite
    vocabulary $\U$ closed under isomorphisms, and for all subsets $\In
    \subseteq \St$ closed under isomorphisms, whenever there exists some $\iota$
    and $\varphi$ such that $(\Pi,\St,\In,\iota,\varphi)$ is a $\psi$-abstract
    state machine, then $\iota$ and $\varphi$ are unique.
\end{definition}

Our main results follow.

\begin{theorem} \label{thone}
  For every $\psi$-definable 
  algorithm $A$, there exists an equivalent $\psi$-abstract state machine $B$.
\end{theorem}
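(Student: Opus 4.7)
The plan is to obtain the desired $\psi$-abstract state machine $B$ essentially as a packaging of the algorithm $A$ together with the ASM program produced by the completeness result. Let $A = \langle \St, \In, \iota, \varphi \rangle$ be a $\psi$-definable algorithm. By assumption $A$ satisfies Postulates I, II, and III, and the infinitesimal generator it uses on flow states is precisely the $\Deltapsi$ associated with the fixed semantics $\psi$. By Theorem \ref{mlemma}, applied to $A$, there exists an ASM program $\Pi$ over the vocabulary $\U$ of $A$ such that $\gen{\Pi,X} = \gen{X}$ for every state $X \in \St$.

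I then set $B = (\Pi, \St, \In, \iota, \varphi)$ and verify the three clauses of the definition of $\psi$-abstract state machine. Clause (a) holds because $\Pi$ is, by construction, an ASM program. For clause (b), Postulate \ref{postulateats} guarantees that states of $A$ are first-order structures over a finite vocabulary and that $\St$ is closed under isomorphism; the same postulate, applied to initial states via clause (1) of Definition \ref{ATS}, gives closure of $\In$ under isomorphism. For clause (c), the tuple $\langle \St, \In, \iota, \varphi \rangle$ is an algorithm by hypothesis (it is $A$ itself), the associated flow generator is $\Deltapsi$ because $A$ is $\psi$-definable, and the identity $\gen{\Pi,X} = \gen{X}$ for all $X \in \St$ is exactly the conclusion of Theorem \ref{mlemma}.

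Equivalence of $A$ and $B$ is then immediate: they share the same state space, initial states, timeline map, and trajectory map, so they induce identical behaviors on every state; and the program $\Pi$ realizes the tagged generator $\gen{\cdot}$ step by step. There is essentially no obstacle in this argument beyond what has already been absorbed into Theorem \ref{mlemma}. The only point requiring attention is bookkeeping: one must confirm that the semantics $\psi$ used in the definition of $\Deltapsi$ for $B$ matches the one intrinsic to $A$, which is exactly what ``$\psi$-definable'' asserts. Consequently, the hard work lies in the completeness theorem itself; Theorem \ref{thone} is a direct corollary of it combined with the definitional unpacking of $\psi$-abstract state machine.
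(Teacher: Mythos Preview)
Your proposal is correct and follows essentially the same approach as the paper: take the program $\Pi$ supplied by Theorem~\ref{mlemma} and package it together with $A$'s own $\St$, $\In$, $\iota$, $\varphi$ to form $B$. The paper's proof is a terse three-line version of exactly this; your additional verification of clauses (a)--(c) of the definition of $\psi$-abstract state machine is sound and merely spells out what the paper leaves implicit.
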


\begin{proof}
  By construction, $A$ is a hybrid dynamical
  system such that $\gen{A,X}=\gen{\Pi,X}$ for some $\Pi$ given by
  previous discussions. Set the states of $B$ to be the
  states of $A$ and the initial states of
  $B$ to the initial states of $A$.
\end{proof}

\begin{theorem} \label{thtwo}
  Assume that $\psi$ is unambiguous.
  For every $\psi$-definable 
  algorithm $A$, there exists a unique equivalent $\psi$-abstract
  state machine $B$ with same states and initial states.
\end{theorem}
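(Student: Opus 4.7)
The plan is to derive Theorem \ref{thtwo} directly from Theorem \ref{thone} together with the unambiguity hypothesis on $\psi$. First I would invoke Theorem \ref{thone} to produce an equivalent $\psi$-abstract state machine $B = (\Pi, \St, \In, \iota, \varphi)$ sharing the states and initial states of $A$. Inspecting the construction in the proof of Theorem \ref{thone}, the states $\St$, initial states $\In$, timeline map $\iota$, and trajectory map $\varphi$ are literally copied from $A$, so we may in fact take $\iota = \iota_A$ and $\varphi = \varphi_A$, which handles the existence half of the statement.

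For uniqueness I would fix an arbitrary second $\psi$-abstract state machine $B' = (\Pi', \St, \In, \iota', \varphi')$ equivalent to $A$ with the same $\St$ and $\In$, and argue that $(\iota', \varphi') = (\iota_A, \varphi_A)$. The pivotal observation is that equivalence of $B'$ with $A$ means $\gen{\Pi', X} = \gen{X}$ for every $X \in \St$, where $\gen{X}$ is computed from $A$'s own trajectory $\varphi_A$. Consequently the tuple $(\Pi', \St, \In, \iota_A, \varphi_A)$ itself meets every clause in the definition of a $\psi$-abstract state machine: the closure-under-isomorphism conditions on $\St$ and $\In$ and Postulates I--III are inherited directly from $A$ being an algorithm, $\Delta_\psi$ is by construction the infinitesimal generator associated with $\psi$, and the tying condition $\gen{\Pi', X} = \gen{X}$ holds by equivalence. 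Unambiguity of $\psi$, applied to $\Pi'$ with state set $\St$ and initial set $\In$, then asserts that at most one pair $(\iota, \varphi)$ can turn this tuple into a $\psi$-abstract state machine. Since both $(\iota', \varphi')$ (by hypothesis that $B'$ is such a machine) and $(\iota_A, \varphi_A)$ (just verified) qualify, they must coincide, giving $\iota' = \iota_A$ and $\varphi' = \varphi_A$.

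The main subtlety, which I would flag at the outset, is the intended reading of \emph{unique}. The program $\Pi$ cannot be syntactically unique: branch reorderings, redundant conditionals, and logically equivalent boolean guards all yield distinct programs with the same behavior. So the statement should be read as asserting uniqueness of the dynamical-system content $(\St, \In, \iota, \varphi)$ outright, with $\Pi$ unique only up to the semantic equivalence $\gen{\Pi, X} = \gen{\Pi', X}$ for all $X$. Under this reading, no serious technical obstacle arises; the only bookkeeping that must be done carefully is the verification, clause by clause, that the intermediate tuple $(\Pi', \St, \In, \iota_A, \varphi_A)$ indeed satisfies every requirement in the definition of a $\psi$-abstract state machine, so that unambiguity applies to it.
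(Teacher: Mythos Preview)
Your proposal is correct and follows the same route as the paper: existence via Theorem~\ref{thone} (with $\iota,\varphi$ literally inherited from $A$), and uniqueness by a direct appeal to the unambiguity hypothesis. The paper's own proof is in fact just the two-line remark ``This is exactly the same proof as for Theorem~\ref{thone}. Unicity comes from the definition of unambiguity,'' so your version is a strictly more detailed unpacking of the same idea; your explicit verification that $(\Pi',\St,\In,\iota_A,\varphi_A)$ is itself a $\psi$-abstract state machine, and your caveat that uniqueness can only be meant up to semantic equivalence of the program component, are points the paper leaves implicit.
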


\begin{proof}[of Theorem \ref{thtwo}]
This is exactly the same proof as for Theorem \ref{thone}. Unicity comes from the definition of
unambiguity.
\end{proof}

\begin{corollary} \label{coromain}
Assume that $\psi$ is unambiguous.
For every $\psi$-definable 
  algorithm $A$, there exists an equivalent $\psi$-solvable ASM
  program.
\end{corollary}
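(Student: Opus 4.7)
The proof is essentially a bookkeeping exercise that chains Theorem \ref{thtwo} with the definition of unambiguity. The plan is as follows.

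First I would apply Theorem \ref{thtwo} to the given $\psi$-definable algorithm $A$. This yields a $\psi$-abstract state machine $B = (\Pi,\St,\In,\iota,\varphi)$ whose states $\St$ and initial states $\In$ coincide with those of $A$, and whose program $\Pi$ satisfies $\gen{\Pi,X} = \gen{X}$ for every state $X$. This $B$ is equivalent to $A$ in the required sense, and $\Pi$ is our candidate for a $\psi$-solvable ASM program.

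Next I would verify that $\Pi$ is $\psi$-solvable for $\St$ and $\In$ in the sense of the definition preceding the corollary. Existence of some pair $(\iota,\varphi)$ making $(\Pi,\St,\In,\iota,\varphi)$ a $\psi$-abstract state machine is immediate from the previous step. Uniqueness is then a direct invocation of the hypothesis that $\psi$ is unambiguous: by the definition of unambiguity, whenever $\St$ and $\In$ are closed under isomorphisms and there exists $(\iota',\varphi')$ such that $(\Pi,\St,\In,\iota',\varphi')$ is a $\psi$-abstract state machine, then $\iota'$ and $\varphi'$ are uniquely determined. Since $\St$ and $\In$ inherit closure under isomorphisms from being the states and initial states of the abstract transition system $A$ (Postulate \ref{postulateats}), the hypotheses of unambiguity are met.

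The combination of these two observations is precisely the definition of $\psi$-solvability for $\Pi$ with respect to $\St$ and $\In$. Hence $\Pi$ is an equivalent $\psi$-solvable ASM program for $A$, as claimed. There is no real obstacle here beyond unpacking definitions; all of the analytical content has been absorbed into Theorem \ref{mlemma}, Theorem \ref{thtwo}, and the unambiguity assumption.
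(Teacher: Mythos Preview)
Your proposal is correct and matches the paper's own (very terse) justification: the paper simply notes that to any $\psi$-definable algorithm $A$ there corresponds an equivalent $\psi$-abstract state machine $B$, and hence a $\psi$-solvable program $\Pi$. Your version just unpacks the ``hence'' by explicitly invoking unambiguity to secure the uniqueness half of $\psi$-solvability, which is exactly the intended argument.
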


 To any 
    algorithm $A$ that is $\psi$-definable there corresponds an
    equivalent $\psi$-abstract state machine $B$, and hence a
    $\psi$-solvable program $\Pi$. Conversely, a $\psi$-abstract state
    machine $B$ corresponds to a $\psi$-definable algorithm.
    However, not every program $\Pi$ is $\psi$-solvable.

When $\psi$-corresponds to $\psi_{\textrm{der}}$, unambiguity comes from
(unicity in)
the Cauchy-Lipschitz theorem.  The fact that
    not every program $\Pi$ is $\psi$-solvable is due to the fact that
    not all differential equations have a solution.



\section{Examples}

The examples in this section are for semantics $\psi_{\textrm{der}}$.
%
Our settings cover, first of all, analog algorithms that are  pure flow, in particular all systems that can be modeled as ordinary differential equations.
A very simple, classical example is the pendulum: the motion of an idealized simple pendulum is governed by the second-order
differential equation $
\theta''+\frac{g}{L}\theta = 0\,,
$
where
$\theta$ is angular displacement, $g$ is gravitational acceleration,
and $L$ is the length of the pendulum rod.
This can indeed be modeled as the program

 \begin{center}
  \begin{minipage}{8cm}
\begin{ttcode}
flow \\
 \s  $\Machine(\theta,\theta_1)$ \\
\s  $\Machine(\theta_1,-\frac{g}{L}\cdot \theta)$ \\
endflow
\end{ttcode}
   \end{minipage}
  \end{center}

\noindent
using the fact that any ordinary differential equation can be put in the form of a vectorial first-order equation, here $\theta_1$ corresponding to the derivative of $\theta$.

As a consequence, our formalism covers very generic classes of continuous-time models of computation, including the GPAC, which corresponds to ordinary differential equations with polynomial right-hand sides \cite{GC03,GBC09}.
Recall that the GPAC was proposed as a mathematical model of
differential analyzers (DAs), one of the most famous analog computer
machines in history.  
Figure~\ref{gpac} (left) depicts a (non-minimal) GPAC that generates sine and cosine.
In this
picture,  $\int$ signifies  some integrator, and  $-1$ denotes
some constant block.
This simple GPAC 
can be modeled by the program

\begin{center}
 \begin{minipage}{8cm}
\begin{ttcode}
flow \\
 \s $\Machine(x,z)$ \\
\s  $\Machine(y,x)$ \\
\s  $\Machine(z,-x)$ \\
endflow
\end{ttcode}
  \end{minipage}
 \end{center}


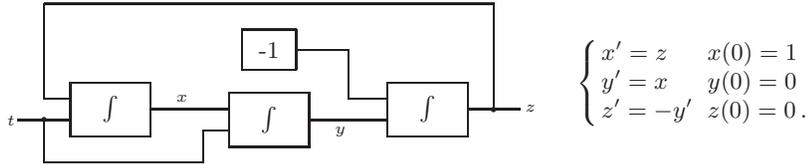
\begin{figure}[t]
\begin{tabular}{ll}
\begin{minipage}{0.5\textwidth}
\begin{center}
\small
\begin{picture}(200,60)(13,10)
\put(40,20){\framebox(30,20){$\int$}}
\put(100,16){\framebox(30,20){$\int$}} \put(100,30){\line(-1,0){30}}
\put(160,20){\framebox(30,20){$\int$}} \put(160,26){\line(-1,0){30}}
\put(105,45){\framebox(20,15){\textrm{-1}}}
\put(160,34){\line(-1,0){15}} \put(145,34){\line(0,1){18.5}}
\put(145,52.5){\line(-1,0){20}} \put(40,26){\line(-1,0){20}}
\put(30,26){\circle*2} \put(30,26){\line(0,-1){16}}
\put(100,22){\line(-1,0){10}} \put(90,22){\line(0,-1){12}}
\put(90,10){\line(-1,0){60}} \put(210,30){\line(-1,0){20}}
\put(200,30){\circle*2} \put(200,30){\line(0,1){40}}
\put(200,70){\line(-1,0){170}} \put(30,70){\line(0,-1){36}}
\put(40,34){\line(-1,0){10}} \put(16,24){$\scriptstyle t$}
\put(212,29){$\scriptstyle z$} \put(140,21){$\scriptstyle
y$} \put(80,33){$\scriptstyle x$}
\end{picture}
\end{center}
\end{minipage}
&
\begin{minipage}{0.5\textwidth}
$$\left\{
\begin{array}[c]{lll}
x^{\prime}=z &  & x(0)=1\\
y^{\prime}=x &  & y(0)=0\\
z^{\prime}=-y^{\prime} &  & z(0)=0\,.
\end{array}
\right.
$$
\end{minipage}
\end{tabular}
\caption{A GPAC for sine and cosine (left). Corresponding evolution (right).}\label{gpac}
\end{figure}



Our proposed model can also adequately describe hybrid
systems, made of alternating sequences of continuous evolution and discrete transitions.
This includes, for example, a simple model of a bouncing ball,
the physics of which are given by the flow equations
$x''  =  - g m $,
where $g$ is the gravitational constant and $v=x'$ is the velocity,
except that upon impact, each time $x=0$, the velocity changes according to
$
v'  =  -k \cdot v',$
where $k$ is the coefficient of impact. Every time the ball
bounces, its speed is reduced by a factor $k$.
This system can be described by a program like

\bigskip

\noindent
\qquad\begin{minipage}{15cm}
\begin{ttcode}
if $x=0$ then \\
\s  $v := -k \cdot v$ \\
else \\
\s flow \\
\s\s   $ \Machine(x,v)$ \\
\s\s $\Machine(v,-g.m)$  \\
\s endflow \\
endif
\end{ttcode}
\end{minipage}

\bigskip

Our setting is an extension of classical discrete-time algorithms; hence, all classical discrete-time algorithms can also be modeled.

 As for examples with semantics other than $ \psi_{\textrm{der}}$:  Observe
  that one can consider timelines like $\mathbb{Q}$ instead of $\R$. (For such
  a timeline, we have $\Continuous{i}$ for all $i \in \mathbb{Q}$.) One
  can define a semantics on such a timeline where for every state $X$
  we have $\Continuous{X}$ by first extending the evolution function
  to $\mathbb{R}$ (for example by
  restricting to continuous
  dynamics) and then using the derivative.
  Constructions of \cite{rust2000hybrid} are also covered by our
  settings: In some sense, the example at the beginning of the paragraph is the
  spirit of the constructions from \cite{rust2000hybrid}, where the
  timeline is the set of hyperreals obtained by multiplying some fixed
  infinitesimal by some hyperinteger (using hyperreals and
  infinitesimals).
Notice  that there is no need to consider derivatives or similar
notions: we could also consider analytic dynamics, and consider a
semantics related to the family of  Taylor coefficients. Weaker
notions of solution, like variational approaches, can also be considered.



\bibliographystyle{splncsa}
\bibliography{biblio2}

\begin{thebibliography}{10}

\bibitem{BSS}
Blum, L., Shub, M., Smale, S.:
\newblock On a theory of computation and complexity over the real numbers; {NP}
  completeness, recursive functions and universal machines.
\newblock Bulletin of the American Mathematical Society \textbf{21} (1989)
  1--46

\bibitem{BD}
Boker, U., Dershowitz, N.:
\newblock {The Church-Turing} thesis over arbitrary domains.
\newblock In: Pillars of Computer Science. Lecture Notes in Computer Science,
  Vol. 4800. Springer (2008)  199--229

\bibitem{3P}
Boker, U., Dershowitz, N.:
\newblock Three paths to effectiveness.
\newblock In: Fields of Logic and Computation.
\newblock Springer (2010)  135--146

\bibitem{Survey}
Bournez, O., Campagnolo, M.L.:
\newblock A survey on continuous time computations.
\newblock In: New Computational Paradigms. Changing Conceptions of What is
  Computable.
\newblock Springer (2008)  383--423

\bibitem{BournezDF12}
Bournez, O., Dershowitz, N., Falkovich, E.:
\newblock Towards an axiomatization of simple analog algorithms.
\newblock In: Proc. 9th Annual Conference on Theory and Applications of Models
  of Computation. Springer (2012)  525--536

\bibitem{report}
{Bournez}, O., {Dershowitz}, N., {N{\'e}ron}, P.:
\newblock {Axiomatizing Analog Algorithms}.
\newblock ArXiv e-prints \url{http://arxiv.org/abs/1604.04295} (2016)

\bibitem{Bush31}
Bush, V.:
\newblock The differential analyser.
\newblock Journal of the Franklin Institute \textbf{212} (1931)  447--488

\bibitem{Cohen}
Cohen, J., Slissenko, A.:
\newblock On implementations of instantaneous actions real-time {ASM} by {ASM}
  with delays.
\newblock In: Proc. 12th Intl. Workshop on Abstract State Machines.
  Universit\'e de Paris 12 (2005)  387--396

\bibitem{Cohen2}
Cohen, J., Slissenko, A.:
\newblock Implementation of sturdy real-time abstract state machines by
  machines with delays.
\newblock In: Proc. 6th Intl. Conf. on Computer Science and Information
  Technology. National Academy of Science of Armenia (2007)

\bibitem{dershowitz2008natural}
Dershowitz, N., Gurevich, Y.:
\newblock A natural axiomatization of computability and proof of {Church's
  Thesis}.
\newblock The Bulletin of Symbolic Logic \textbf{14} (2008)  299--350

\bibitem{fu2015models}
Fu, M.Q., Zucker, J.:
\newblock Models of computation for partial functions on the reals.
\newblock J. Logical and Algebraic Methods in Programming \textbf{84} (2015)
  218--237

\bibitem{GBC09}
Gra{\c{c}}a, D.S., Buescu, J., Campagnolo, M.L.:
\newblock Computational bounds on polynomial differential equations.
\newblock Appl. Math. Comput. \textbf{215} (2009)  1375--1385

\bibitem{GC03}
Gra{\c c}a, D.S., Costa, J.F.:
\newblock Analog computers and recursive functions over the reals.
\newblock Journal of Complexity \textbf{19} (2003)  644--664

\bibitem{Gur00}
Gurevich, Y.:
\newblock Sequential abstract-state machines capture sequential algorithms.
\newblock ACM Trans. Comput. Log. \textbf{1} (2000)  77--111

\bibitem{hasuo2012exercises}
Hasuo, I., Suenaga, K.:
\newblock Exercises in nonstandard static analysis of hybrid systems.
\newblock In: Computer Aided Verification. Springer (2012)  462--478

\bibitem{Nyc96}
Nyce, J.M.:
\newblock Guest editor's introduction.
\newblock IEEE Ann. Hist. Comput. \textbf{18} (1996)  3--4

\bibitem{Platzer08}
Platzer, A.:
\newblock Differential dynamic logic for hybrid systems.
\newblock J. Automated Reasoning \textbf{41} (2008)  143--189

\bibitem{Reisig}
Reisig, W.:
\newblock On {Gurevich's} theorem on sequential algorithms.
\newblock Acta Informatica \textbf{39} (2003)  273--305

\bibitem{rust2000hybrid}
Rust, H.:
\newblock Hybrid abstract state machines: Using the hyperreals for describing
  continuous changes in a discrete notation.
\newblock In: Intl. Workshop on Abstract State Machines. Swiss Federal
  Institute of Technology (2000)  341--356

\bibitem{Sha41}
Shannon, C.E.:
\newblock Mathematical theory of the differential analyser.
\newblock Journal of Mathematics and Physics \textbf{20} (1941)  337--354

\bibitem{suenaga2011programming}
Suenaga, K., Hasuo, I.:
\newblock Programming with infinitesimals: A while-language for hybrid system
  modeling.
\newblock In: Automata, Languages and Programming. Springer (2011)  392--403

\bibitem{TZ}
Tucker, J.V., Zucker, J.I.:
\newblock A network model of analogue computation over metric algebras.
\newblock In: New Computational Paradigms.
\newblock Springer (2005)  515--529

\end{thebibliography}
\end{document}